\newcolumntype{C}{>{\centering\arraybackslash}X}
\newcolumntype{R}{>{\flushright\arraybackslash}X}
\newcolumntype{L}{>{\flushleft\arraybackslash}X}
\newcommand*{\unit}[2]{\mbox{\ensuremath{#1\,\mathrm{#2}}}}
\newcommand*{\scale}[2][4]{\scalebox{#1}{$#2$}}
\newcommand*{\figref}[1]{Figure~\ref{#1}}
\newcommand*{\tabref}[1]{Table~\ref{#1}}
\theoremstyle{definition}
\newtheorem{thm}{Theorem}
\newtheorem{lem}{Lemma}
\newtheorem{defi}{Definition}
\newcommand{\comm}[1]{}
\DeclareMathAlphabet{\mathppl}{T1}{ppl}{m}{it}
\DeclareMathAlphabet{\mathphv}{T1}{phv}{m}{it}
\DeclareMathAlphabet{\mathpzc}{T1}{pzc}{m}{it}
\newlength{\norlen} \setlength{\norlen}{0.2ex} 
\newcommand{\Abs}[1]{\left\vert #1 \right\vert}
\newcommand{\Field}[1]{\mathbb{\uppercase{#1}}}
\newcommand{\Herm}[1]{{#1}^{\mathrm{H}}}
\newcommand{\Mt}[1]{\mathbf{#1}}
\newcommand{\Norm}[1]{\left\Vert #1 \right\Vert}
\newcommand{\NormTwo}[1]{\Norm{#1}_{2}}
\newcommand{\Set}[1]{\mathcal{\uppercase{#1}}}
\newcommand{\Transp}[1]{{#1}^{\mathrm{T}}}
\newcommand{\Vt}[1]{\mathbf{\lowercase{#1}}}
\newcommand{\mtA}{\Mt{A}}
\newcommand{\mtI}{\Mt{I}}
\newcommand{\mtX}{\Mt{X}}
\newcommand{\vtH}{\Vt{H}}
\newcommand{\vtU}{\Vt{U}}
\newcommand{\vtV}{\Vt{V}}
\newcommand{\vtW}{\Vt{W}}
\newcommand{\vtX}{\Vt{X}}
\newcommand{\vtY}{\Vt{Y}}
\newcommand{\vtZ}{\Vt{Z}}
\newcommand{\vtEta}{\Vt{\boldsymbol{\eta}}}
\newcommand{\vtZero}{\Vt{0}}
\newcommand{\fdC}{\Field{C}}
\newcommand{\fdE}{\Field{E}}
\newcommand{\fdR}{\Field{R}}
\newcommand{\stU}{\Set{U}}
\newcommand{\bbE}{\mathbb{E}}
\newcommand{\calD}{\mathcal{D}}
\newcommand{\calO}{\mathcal{O}}
\newcommand{\calS}{\mathcal{S}}
\begin{document}
\date{\today}


\title{Simultaneous Wireless Information and Power Transfer for Federated Learning}

\author{
Jos\'{e} Mairton Barros da Silva Jr.$^{\star}$,
\thanks{The simulations were performed on resources provided by the Swedish National Infrastructure for Computing 
(SNIC) at PDC Centre for High Performance Computing (PDC-HPC).
This work has received funding from the European Research Council (ERC) under the European Union’s Horizon 2020 
research and innovation programme (Grant agreement No. 819819).}
Konstantinos Ntougias$^{\ddagger}$, Ioannis Krikidis$^{\ddagger}$, G\'{a}bor Fodor$^{\star\dagger}$, Carlo 
Fischione${^\star}$\\
$^\star$School of Electrical Engineering and Computer Science, KTH Royal Institute of Technology, Stockholm, Sweden\\
$^\ddagger$Department of Electrical and Computer Engineering, University of Cyprus, Nicosia, Cyprus\\
$^\dagger$Ericsson Research, Kista, Sweden\\
Email: $^\star\{$jmbdsj, gaborf, carlofi$\}@$kth.se, $^\ddagger\{$ntougias.konstantinos, krikidis.ioannis$\}@$ucy.ac.cy
}



\begin{acronym}[LTE-Advanced]
  \acro{2G}{Second Generation}
  \acro{3-DAP}{3-Dimensional Assignment Problem}
  \acro{3G}{3$^\text{rd}$~Generation}
  \acro{3GPP}{3$^\text{rd}$~Generation Partnership Project}
  \acro{4G}{4$^\text{th}$~Generation}
  \acro{5G}{5$^\text{th}$~Generation}
  \acro{AA}{Antenna Array}
  \acro{AC}{Admission Control}
  \acro{AD}{Attack-Decay}
  \acro{ADC}{analog-to-digital converter}
  \acro{ADSL}{Asymmetric Digital Subscriber Line}
  \acro{AHW}{Alternate Hop-and-Wait}
  \acro{AMC}{Adaptive Modulation and Coding}
  \acro{AP}{Access Point}
  \acro{APA}{Adaptive Power Allocation}
  \acro{ARMA}{Autoregressive Moving Average}
  \acro{ATES}{Adaptive Throughput-based Efficiency-Satisfaction Trade-Off}
  \acro{AWGN}{additive white Gaussian noise}
  \acro{BB}{Branch and Bound}
  \acro{BCD}{block coordinate descent}
  \acro{BD}{Block Diagonalization}
  \acro{BER}{Bit Error Rate}
  \acro{BF}{Best Fit}
  \acro{BFD}{bidirectional full duplex}
  \acro{BLER}{BLock Error Rate}
  \acro{BPC}{Binary Power Control}
  \acro{BPSK}{Binary Phase-Shift Keying}
  \acro{BRA}{Balanced Random Allocation}
  \acro{BS}{base station}
  \acro{BSUM}{block successive upper-bound minimization}
  \acro{CAP}{Combinatorial Allocation Problem}
  \acro{CAPEX}{Capital Expenditure}
  \acro{CBF}{Coordinated Beamforming}
  \acro{CBR}{Constant Bit Rate}
  \acro{CBS}{Class Based Scheduling}
  \acro{CC}{Congestion Control}
  \acro{CDF}{Cumulative Distribution Function}
  \acro{CDMA}{Code-Division Multiple Access}
  \acro{CL}{Closed Loop}
  \acro{CLPC}{Closed Loop Power Control}
  \acro{CNR}{Channel-to-Noise Ratio}
  \acro{CNN}{convolutional neural network}
  \acro{CPA}{Cellular Protection Algorithm}
  \acro{CPICH}{Common Pilot Channel}
  \acro{CoMP}{Coordinated Multi-Point}
  \acro{CQI}{Channel Quality Indicator}
  \acro{CRM}{Constrained Rate Maximization}
	\acro{CRN}{Cognitive Radio Network}
  \acro{CS}{Coordinated Scheduling}
  \acro{CSI}{channel state information}
  \acro{CSIR}{channel state information at the receiver}
  \acro{CUE}{Cellular User Equipment}
  \acro{D2D}{device-to-device}
  \acro{DAC}{digital-to-analog converter}
  \acro{DC}{direct current}
  \acro{DCA}{Dynamic Channel Allocation}
  \acro{DE}{Differential Evolution}
  \acro{DFT}{Discrete Fourier Transform}
  \acro{DIST}{Distance}
  \acro{DL}{downlink}
  \acro{DMA}{Double Moving Average}
  \acro{DMRS}{demodulation reference signal}
  \acro{D2DM}{D2D Mode}
  \acro{DMS}{D2D Mode Selection}
  \acro{DPC}{Dirty Paper Coding}
  \acro{DRA}{Dynamic Resource Assignment}
  \acro{DSA}{Dynamic Spectrum Access}
  \acro{DSM}{Delay-based Satisfaction Maximization}
  \acro{ECC}{Electronic Communications Committee}
  \acro{EFLC}{Error Feedback Based Load Control}
  \acro{EI}{Efficiency Indicator}
  \acro{EMPOWER}{learning }
  \acro{eNB}{Evolved Node B}
  \acro{EPA}{Equal Power Allocation}
  \acro{EPC}{Evolved Packet Core}
  \acro{EPS}{Evolved Packet System}
  \acro{E-UTRAN}{Evolved Universal Terrestrial Radio Access Network}
  \acro{ES}{Exhaustive Search}
  \acro{FD}{full duplex}
  \acro{FDD}{frequency division duplex}
  \acro{FDM}{Frequency Division Multiplexing}
  \acro{FER}{Frame Erasure Rate}
  \acro{FF}{Fast Fading}
  \acro{FL}{federated learning}
  \acro{FSB}{Fixed Switched Beamforming}
  \acro{FST}{Fixed SNR Target}
  \acro{FTP}{File Transfer Protocol}
  \acro{GA}{Genetic Algorithm}
  \acro{GBR}{Guaranteed Bit Rate}
  \acro{GLR}{Gain to Leakage Ratio}
  \acro{GOS}{Generated Orthogonal Sequence}
  \acro{GPL}{GNU General Public License}
  \acro{GRP}{Grouping}
  \acro{HARQ}{Hybrid Automatic Repeat Request}
  \acro{HD}{half-duplex}
  \acro{HMS}{Harmonic Mode Selection}
  \acro{HOL}{Head Of Line}
  \acro{HSDPA}{High-Speed Downlink Packet Access}
  \acro{HSPA}{High Speed Packet Access}
  \acro{HTTP}{HyperText Transfer Protocol}
  \acro{ICMP}{Internet Control Message Protocol}
  \acro{ICI}{Intercell Interference}
  \acro{ID}{Identification}
  \acro{IETF}{Internet Engineering Task Force}
  \acro{ILP}{Integer Linear Program}
  \acro{JRAPAP}{Joint RB Assignment and Power Allocation Problem}
  \acro{UID}{Unique Identification}
  \acro{IID}{Independent and Identically Distributed}
  \acro{IIR}{Infinite Impulse Response}
  \acro{ILP}{Integer Linear Problem}
  \acro{IMT}{International Mobile Telecommunications}
  \acro{INV}{Inverted Norm-based Grouping}
  \acro{IoT}{Internet of Things}
  \acro{IP}{Integer Programming}
  \acro{IPv6}{Internet Protocol Version 6}
  \acro{ISD}{Inter-Site Distance}
  \acro{ISI}{Inter Symbol Interference}
  \acro{ITU}{International Telecommunication Union}
  \acro{JAFM}{joint assignment and fairness maximization}
  \acro{JAFMA}{joint assignment and fairness maximization algorithm}
  \acro{JOAS}{Joint Opportunistic Assignment and Scheduling}
  \acro{JOS}{Joint Opportunistic Scheduling}
  \acro{JP}{Joint Processing}
	\acro{JS}{Jump-Stay}
  \acro{KKT}{Karush-Kuhn-Tucker}
  \acro{L3}{Layer-3}
  \acro{LAC}{Link Admission Control}
  \acro{LA}{Link Adaptation}
  \acro{LC}{Load Control}
  \acro{LOS}{line of sight}
  \acro{LP}{Linear Programming}
  \acro{LTE}{Long Term Evolution}
	\acro{LTE-A}{\ac{LTE}-Advanced}
  \acro{LTE-Advanced}{Long Term Evolution Advanced}
  \acro{M2M}{Machine-to-Machine}
  \acro{MAC}{medium access control}
  \acro{MANET}{Mobile Ad hoc Network}
  \acro{MC}{Modular Clock}
  \acro{MCS}{Modulation and Coding Scheme}
  \acro{MDB}{Measured Delay Based}
  \acro{MDI}{Minimum D2D Interference}
  \acro{MF}{Matched Filter}
  \acro{MG}{Maximum Gain}
  \acro{MH}{Multi-Hop}
  \acro{MIMO}{Multiple Input Multiple Output}
  \acro{MINLP}{mixed integer nonlinear programming}
  \acro{MIP}{Mixed Integer Programming}
  \acro{MISO}{multiple input single output}
  \acro{MLWDF}{Modified Largest Weighted Delay First}
  \acro{MME}{Mobility Management Entity}
  \acro{MMSE}{minimum mean squared error}
  \acro{MOS}{Mean Opinion Score}
  \acro{MPF}{Multicarrier Proportional Fair}
  \acro{MRA}{Maximum Rate Allocation}
  \acro{MR}{Maximum Rate}
  \acro{MRC}{maximum ratio combining}
  \acro{MRT}{maximum ratio transmission}
  \acro{MRUS}{Maximum Rate with User Satisfaction}
  \acro{MS}{Mode Selection}
  \acro{MSE}{mean squared error}
  \acro{MSI}{Multi-Stream Interference}
  \acro{MTC}{Machine-Type Communication}
  \acro{MTSI}{Multimedia Telephony Services over IMS}
  \acro{MTSM}{Modified Throughput-based Satisfaction Maximization}
  \acro{MU-MIMO}{Multi-User Multiple Input Multiple Output}
  \acro{MU}{Multi-User}
  \acro{NAS}{Non-Access Stratum}
  \acro{NB}{Node B}
	\acro{NCL}{Neighbor Cell List}
  \acro{NLP}{Nonlinear Programming}
  \acro{NLOS}{non-line of sight}
  \acro{NMSE}{Normalized Mean Square Error}
  \acro{NORM}{Normalized Projection-based Grouping}
  \acro{NP}{non-polynomial time}
  \acro{NRT}{Non-Real Time}
  \acro{NSPS}{National Security and Public Safety Services}
  \acro{O2I}{Outdoor to Indoor}
  \acro{OFDMA}{Orthogonal Frequency Division Multiple Access}
  \acro{OFDM}{Orthogonal Frequency Division Multiplexing}
  \acro{OFPC}{Open Loop with Fractional Path Loss Compensation}
	\acro{O2I}{Outdoor-to-Indoor}
  \acro{OL}{Open Loop}
  \acro{OLPC}{Open-Loop Power Control}
  \acro{OL-PC}{Open-Loop Power Control}
  \acro{OPEX}{Operational Expenditure}
  \acro{ORB}{Orthogonal Random Beamforming}
  \acro{JO-PF}{Joint Opportunistic Proportional Fair}
  \acro{OSI}{Open Systems Interconnection}
  \acro{PAIR}{D2D Pair Gain-based Grouping}
  \acro{PAPR}{Peak-to-Average Power Ratio}
  \acro{P2P}{Peer-to-Peer}
  \acro{PC}{Power Control}
  \acro{PCI}{Physical Cell ID}
  \acro{PDCCH}{physical downlink control channel}
  \acro{PDD}{penalty dual decomposition}
  \acro{PDF}{Probability Density Function}
  \acro{PER}{Packet Error Rate}
  \acro{PF}{Proportional Fair}
  \acro{P-GW}{Packet Data Network Gateway}
  \acro{PL}{Pathloss}
  \acro{PRB}{Physical Resource Block}
  \acro{PROJ}{Projection-based Grouping}
  \acro{ProSe}{Proximity Services}
  \acro{PS}{phase shifter}
  \acro{PSO}{Particle Swarm Optimization}
  \acro{PUCCH}{physical uplink control channel}
  \acro{PZF}{Projected Zero-Forcing}
  \acro{QAM}{Quadrature Amplitude Modulation}
  \acro{QoS}{quality of service}
  \acro{QPSK}{Quadri-Phase Shift Keying}
  \acro{RAISES}{Reallocation-based Assignment for Improved Spectral Efficiency and Satisfaction}
  \acro{RAN}{Radio Access Network}
  \acro{RA}{Resource Allocation}
  \acro{RAT}{Radio Access Technology}
  \acro{RATE}{Rate-based}
  \acro{RB}{resource block}
  \acro{RBG}{Resource Block Group}
  \acro{REF}{Reference Grouping}
  \acro{RF}{radio frequency}
  \acro{RLC}{Radio Link Control}
  \acro{RM}{Rate Maximization}
  \acro{RNC}{Radio Network Controller}
  \acro{RND}{Random Grouping}
  \acro{RRA}{Radio Resource Allocation}
  \acro{RRM}{Radio Resource Management}
  \acro{RSCP}{Received Signal Code Power}
  \acro{RSRP}{reference signal receive power}
  \acro{RSRQ}{Reference Signal Receive Quality}
  \acro{RR}{Round Robin}
  \acro{RRC}{Radio Resource Control}
  \acro{RSSI}{received signal strength indicator}
  \acro{RT}{Real Time}
  \acro{RU}{Resource Unit}
  \acro{RUNE}{RUdimentary Network Emulator}
  \acro{RV}{Random Variable}
  \acro{SAC}{Session Admission Control}
  \acro{SCM}{Spatial Channel Model}
  \acro{SC-FDMA}{Single Carrier - Frequency Division Multiple Access}
  \acro{SD}{Soft Dropping}
  \acro{S-D}{Source-Destination}
  \acro{SDPC}{Soft Dropping Power Control}
  \acro{SDMA}{Space-Division Multiple Access}
  \acro{SDR}{semidefinite relaxation}
  \acro{SDP}{semidefinite programming}
  \acro{SER}{Symbol Error Rate}
  \acro{SES}{Simple Exponential Smoothing}
  \acro{S-GW}{Serving Gateway}
  \acro{SGD}{stochastic gradient descent}
  \acro{SINR}{signal-to-interference-plus-noise ratio}
  \acro{SI}{self-interference}
  \acro{SIP}{Session Initiation Protocol}
  \acro{SISO}{Single Input Single Output}
  \acro{SIMO}{Single Input Multiple Output}
  \acro{SIR}{Signal to Interference Ratio}
  \acro{SLNR}{Signal-to-Leakage-plus-Noise Ratio}
  \acro{SMA}{Simple Moving Average}
  \acro{SNR}{signal to noise ratio}
  \acro{SOC}{second order cone}
  \acro{SOCP}{second order cone programming}
  \acro{SORA}{Satisfaction Oriented Resource Allocation}
  \acro{SORA-NRT}{Satisfaction-Oriented Resource Allocation for Non-Real Time Services}
  \acro{SORA-RT}{Satisfaction-Oriented Resource Allocation for Real Time Services}
  \acro{SPF}{Single-Carrier Proportional Fair}
  \acro{SRA}{Sequential Removal Algorithm}
  \acro{SRS}{sounding reference signal}
  \acro{SU-MIMO}{Single-User Multiple Input Multiple Output}
  \acro{SU}{Single-User}
  \acro{SVD}{Singular Value Decomposition}
  \acro{SVM}{support vector machine}
  \acro{SWIPT}{simultaneous wireless information and power transfer}
  \acro{TCP}{Transmission Control Protocol}
  \acro{TDD}{time division duplex}
  \acro{TDMA}{time division multiple access}
  \acro{TNFD}{three node full duplex}
  \acro{TETRA}{Terrestrial Trunked Radio}
  \acro{TP}{Transmit Power}
  \acro{TPC}{Transmit Power Control}
  \acro{TTI}{transmission time interval}
  \acro{TTR}{Time-To-Rendezvous}
  \acro{TSM}{Throughput-based Satisfaction Maximization}
  \acro{TU}{Typical Urban}
  \acro{UE}{user equipment}
  \acro{UEPS}{Urgency and Efficiency-based Packet Scheduling}
  \acro{UL}{uplink}
  \acro{UMTS}{Universal Mobile Telecommunications System}
  \acro{URI}{Uniform Resource Identifier}
  \acro{URM}{Unconstrained Rate Maximization}
  \acro{VR}{Virtual Resource}
  \acro{VoIP}{Voice over IP}
  \acro{WAN}{Wireless Access Network}
  \acro{WCDMA}{Wideband Code Division Multiple Access}
  \acro{WF}{Water-filling}
  \acro{WiMAX}{Worldwide Interoperability for Microwave Access}
  \acro{WINNER}{Wireless World Initiative New Radio}
  \acro{WLAN}{Wireless Local Area Network}
  \acro{WMMSE}{weighted minimum mean square error}
  \acro{WMPF}{Weighted Multicarrier Proportional Fair}
  \acro{WPF}{Weighted Proportional Fair}
  \acro{WSN}{Wireless Sensor Network}
  \acro{WWW}{World Wide Web}
  \acro{XIXO}{(Single or Multiple) Input (Single or Multiple) Output}
  \acro{ZF}{zero-forcing}
  \acro{ZMCSCG}{Zero Mean Circularly Symmetric Complex Gaussian}
\end{acronym}

\maketitle

\begin{abstract}
In the Internet of Things, learning is one of most prominent tasks. 
In this paper, we consider an Internet of Things scenario where federated learning is used with simultaneous 
transmission of model data and wireless power. 
We investigate the trade-off between the number of communication rounds and communication round time while harvesting 
energy to compensate the energy expenditure. 
We formulate and solve an optimization problem by considering the number of local iterations on devices, the time to 
transmit-receive the model updates, and to harvest sufficient energy.
Numerical results indicate that maximum ratio transmission and zero-forcing beamforming for the optimization of the 
local iterations on devices substantially boost the test accuracy of the learning task.
Moreover, maximum ratio transmission instead of zero-forcing provides the best test accuracy and communication round 
time trade-off for various energy harvesting percentages.
Thus, it is possible to learn a model quickly with few communication rounds without depleting the battery.
\end{abstract}

\begin{IEEEkeywords}
Federated learning, IoT, SWIPT, communication round and time minimization, energy harvesting
\end{IEEEkeywords}

\section{Introduction}\label{sec:intro}

Internet of Things (\acs{IoT}\acused{IoT}) devices will arguably generate large amount of data, which can be of great 
use for prediction and inference in several applications of major societal interest~\cite{Ericsson2020}. 
Thus, it is natural to push machine learning solution into the \ac{IoT} devices~\cite{Zhou2019}. 
However, machine learning is traditionally conceived in centralized settings, where all data is available at one 
location.
When we use machine learning on distributed wireless scenarios, such as the \ac{IoT} applications, we encounter new 
challenges for both learning (such as heterogeneity of the data generated by the distributed devices and 
privacy issues~\cite{McMahan2017}), and communication (such as communication efficiency, interference channel, 
fading, and battery limitations~\cite{Hellstrom2020}).

Federated learning (\acs{FL}\acused{FL}) is a recently proposed distributed machine learning technique that attempts to 
overcome some of these challenges, since it enhances the privacy by sending only model updates instead of raw data, 
and considers the heterogeneity of devices data and the communication efficiency in the learning 
convergence~\cite{McMahan2017}.
For \ac{IoT} scenarios, the limited energy storage capacity and high maintenance cost of devices battery is the main 
challenge~\cite{Clerckx2021}.
To this end, \ac{FL} in \ac{IoT} scenarios must consider the battery of the devices as one of the most 
important aspects, together with the impact of communication efficiency, interference, and fading.

One promising solution to overcome the energy limitations in \ac{IoT} is energy harvesting, which allows devices to 
harvest \acl{RF} energy when communicating with an edge server, e.g., a \acl{BS} or access point~\cite{Clerckx2019}.
Hence, the use of energy harvesting for \ac{IoT} devices with \ac{FL} would be a perfect combination.
However, how to allow the devices to harvest sufficient energy to train a \ac{FL} model is largely an open question.
This idea is highly novel and, to the best of our knowledge, there is only one similar work in the 
literature~\cite{Zeng2021}.
In~\cite{Zeng2021}, the authors consider a wireless power transfer scenario served by power-beacons using \ac{FL}.
Such a work shows that higher density of power beacons improves the learning convergence, and that the mini-batch size 
and processor clock frequency is inversely proportional to the energy spent in local computation.

Different from the authors in~\cite{Zeng2021}, we investigate the trade-off among the number of communication rounds 
between an edge server and the \ac{IoT} devices, and the communication round time necessary to train, transmit, and 
receive the model while harvesting sufficient energy to compensate part of the total energy spent.
This trade-off is important to understand in which conditions we can train a \ac{FL} task without depleting the battery 
of the devices.
We consider a scenario with a multi-antenna edge server using the \ac{SWIPT} technology with \ac{IoT} devices using 
\ac{FL} to simultaneously train a learning model, while communicating with an edge server (see 
Figure~\ref{fig:scenario_federated}).
Moreover, we consider the use of \mbox{FedProx}~\cite{Li2020b}, a recent generalization of \ac{FL} that allows to 
optimize the number of local iterations at each device, while guaranteeing convergence to (non-)convex learning tasks.
We propose an optimization problem to minimize the number of communication rounds and communication round time while 
optimizing the number of local iterations, the time to transmit/receive, and to harvest a 
percentage of the total energy spent at each round and device.
\begin{figure}
\centering
\includegraphics[width=0.75\linewidth,trim=0mm 0mm 0mm 0mm,clip]{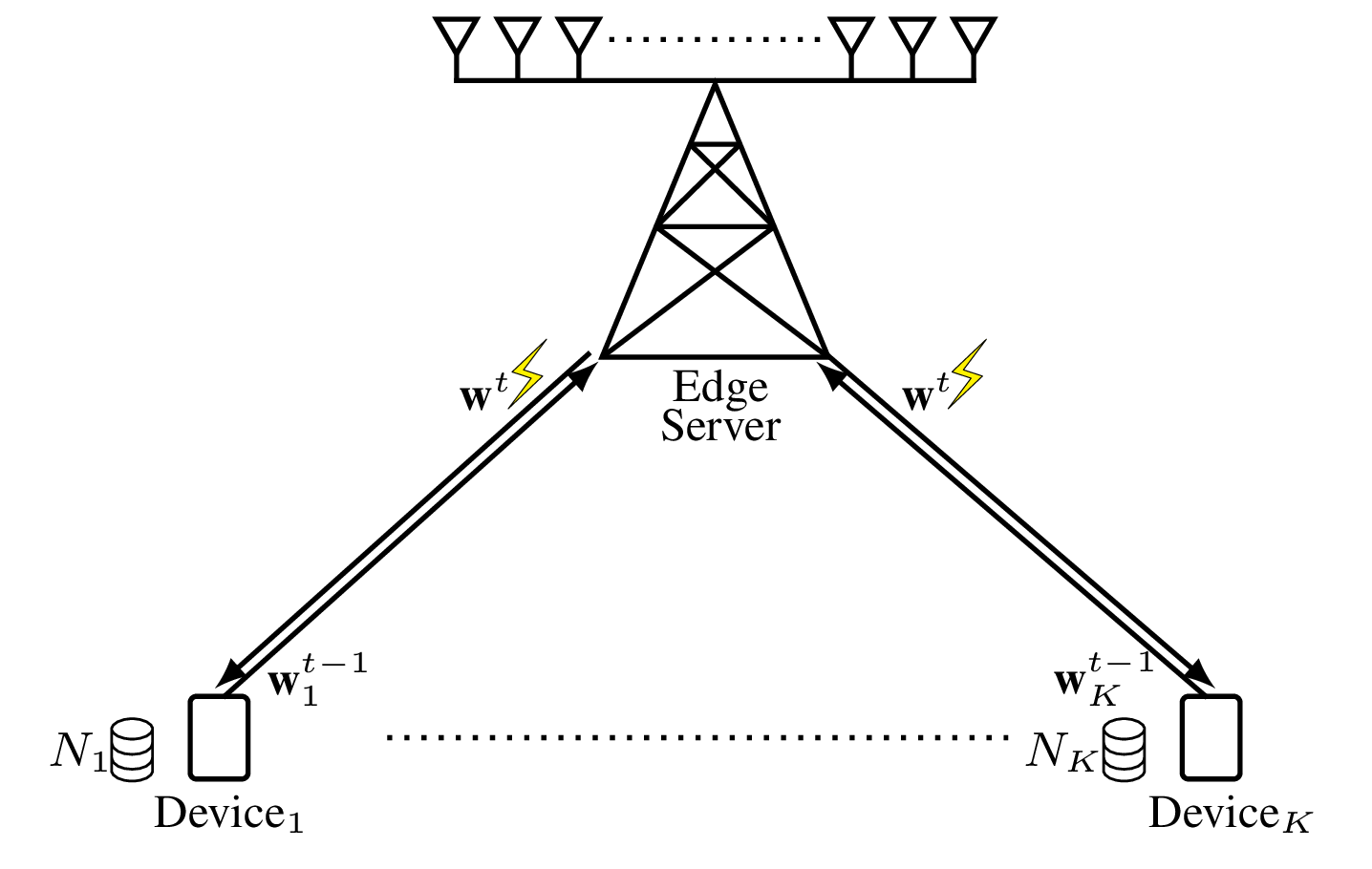}
\caption{An example of a network employing \ac{FL} and \ac{SWIPT} with $K$ devices. The devices send the weights 
$\vtW_k^{t-1}$ in the \acl{UL}, while the edge server sends the aggregated weight $\vtW^t$ and energy in the \acl{DL}.}
\label{fig:scenario_federated}
\end{figure}

We provide the solution to this problem while considering low transmit power at devices and fixed beamforming 
techniques, e.g., \ac{MRT} and \ac{ZF}, at the edge server.
The results indicate that the learning test accuracy using either \ac{MRT} or \ac{ZF} with the optimization of the 
local number of iterations outperform a solution without such optimization.
We also show that \ac{MRT} vastly outperforms \ac{ZF} in terms of minimum communication round time for all the 
percentage of the energy harvesting required. 
Therefore, we show that it is possible to use \ac{FL} in \ac{IoT} scenarios without depleting the battery of the 
devices.

\emph{Notation:}
Vectors and matrices are denoted by bold lower and upper case letters, respectively;
$\Herm{\mtA}$ is the Hermitian of $\mtA$; $\mtI_{K}$ is the identity matrix of dimension $K$; and by $\fdC$ the complex 
field.
We denote expectation by $\bbE[\cdot]$; the uniform distribution with minimum and maximum parameters $a,b$, 
respectively, is denoted by $\stU(a,b)$; the gradient of $f$ at $\vtX$ by $\nabla f(\vtX)\in\fdR^n$; 
and the $2-$norm of a vector $\vtX$ by $\Norm{\vtX}$.
\section{System Model}

We consider a general \ac{IoT} scenario with an edge server equipped with $M$ antennas and $K$ single-antenna 
devices (see Figure~\ref{fig:scenario_federated}). 
The devices represent sensors gathering data for an \ac{FL} task and a power splitting \ac{SWIPT} receiver 
architecture~\cite{Clerckx2019,Timotheou2014}. 
In the following, we define the communication, energy, time, and learning models.
\subsection{Communication Model}\label{sub:comm_model}
Let $\vtH_{k}\in\fdC^{M\times 1}$ denote the \acl{UL} complex channel vector comprising small- and large-scale fading 
that includes the effects of multipath, shadowing, and path-loss between device $k$ and the edge server. 
The transmit power of device $k$ at the \acl{UL} is denoted as $p_k$, which is assumed fixed.
The received signal at the edge server is given by
\begin{equation}
\overline{\vtY}^u = \sum\nolimits_{k=1}^{K} \vtH_k\sqrt{p_k}s_k^u + \vtEta^u \in \mathds{C}^{M \times 1},
\end{equation}
where $s_k^u$ is the transmitted data symbol by device $k$ with zero mean and unit power; and $\vtEta^u\in\fdC^{M\times 
1}$ is the \acl{AWGN} with covariance matrix $\bbE\{ \vtEta^u\Herm{\vtEta^u} \} = \sigma^2\mtI_{M\times M}$.
The \ac{SINR} of the \acl{UL} device $k$ at the edge server is denoted as
\begin{equation}\label{eq:sinr_ul}
\Gamma_k^u = \frac{p_k \Abs{\Herm{\vtU}_k\vtH_k}^2}{\sum_{j\neq k} p_j \Abs{\Herm{\vtU}_k\vtH_j}^2 + \sigma^2},
\end{equation}
where $\vtU_k$ is the \acl{UL} combiner at the edge server and is assumed fixed during the coherence time of the 
channel.
In the \acl{UL}, the achievable rate of device $k$ is given by
\begin{equation}
R_k^u = B_c\log_2\left(1 + \Gamma_k^u\right),
\end{equation}
where $B_c$ denotes the channel bandwidth of each device.

In the \acl{DL}, the edge server aggregates the weight updates from all $k$ devices and unicasts the aggregated updates 
to each device $k$.
The edge server applies a unicast transmission instead of multicast to increase the beamforming gain when harvesting 
the energy.
The transmitted beamforming vector for device $k$ is denoted as $\vtV_k\in\fdC^{M\times 1}$, which is assumed fixed 
according to \ac{MRT} or \ac{ZF}.
The received signal at device $k$ is given by
\begin{equation}\label{eq:rx_symbol_DL}
\overline{y}_k^d = \Herm{\vtH}_k\vtV_k s_k^d + \sum\nolimits_{j\neq k} \Herm{\vtH}_k\vtV_j s_j^d + \eta^d,
\end{equation}
where $s_k^d$ is the transmitted data symbol to device $k$ with zero mean and unit power; and $\eta^d\in\fdC^{1}$ is 
the \acl{AWGN} with variance $\bbE\{ \Abs{\eta^d}^2 \} = \sigma^2$.
Note that we allow for different $s_k^d$ as transmitted symbols in Eq.~\eqref{eq:rx_symbol_DL}, which 
reflect different coding between the devices.
We denote by $P_k^r$ the received power at device $k$, which is given by
\begin{equation}\label{eq:rec_power}
P_k^r = \sum\nolimits_{j=1}^{K} \Abs{\Herm{\vtH}_k\vtV_j}^2 + \sigma^2.
\end{equation}
Note that depending on the beamforming $\vtV_k$, the received power changes drastically due to the cancellation or not 
of the multi-user interference in \ac{ZF} and \ac{MRT}, respectively.
At device $k$, the receiver splits its received signal into the data detection and energy storage circuits.
Let us denote the power splitting parameter for device $k$ as $\delta_k\in (0,1)$, which is fixed and indicates that 
$\delta_k P_k^r$ is directed to the data decoding unit and $(1 - \delta_k) P_k^r$ to the energy harvesting unit.
During the baseband conversion, an additional circuit noise, denoted as $v_i$, is present due to the phase offsets and 
non-linearities, which is modelled as an \acl{AWGN} with zero mean and variance $\sigma_c^2$.
Thus, the \ac{SINR} at device $k$ is given by
\begin{equation}
\Gamma_k^d = \frac{\delta_k \Abs{\Herm{\vtH}_k\vtV_k}^2 }{\delta_k \left(\sum_{j\neq k} \Abs{\Herm{\vtH}_k\vtV_j}^2 
+ \sigma^2 \right) + \sigma_c^2 }.
\end{equation}
In the \acl{DL}, the achievable rate at device $k$ is given by
\begin{equation}
R_k^d = B_c\log_2\left(1 + \Gamma_k^d\right).
\end{equation}

\subsection{Energy and Time Models}\label{sub:energy_model}
We define the energy consumption to compute the model at device $k$ as~\cite{Yang2021}
\begin{equation}\label{eq:energy_comp}
E_k^c = \kappa C_k A_k I_k f_k^2,
\end{equation}
where $\kappa$ is the effective switched capacitance, $C_k$ is the number of central processing unit cycles required 
for computing one sample data at device in Mcycles/bit, $A_k$ is the dataset size in bits at device $k$, $I_k$ is the 
number of local iterations performed in the training phase by the learning algorithm, and $f_k$ is the processor clock 
frequency at device $k$ in GHz.
Assuming that the \acl{UL} transmission occurs during $t_k^u$ seconds and the transmitted power is $p_k$, the energy 
to transmit the weights in the \acl{UL} is $E_k^t=t_k^u p_k$. 
Hence, the total energy spent at device $k$ during one communication round is $E_k^c + E_k^t$.

We consider a nonlinear energy harvesting model as in~\cite{Xu2017}, in which the harvested power $P_k^h$ is given by
\begin{align}
P_k^h = \alpha_1 \left((1-\delta_k)P_k^r \right)^2 + \alpha_2 \left((1-\delta_k)P_k^r \right) + \alpha_3,
\end{align}
where $\alpha_1,\alpha_2,\alpha_3\in\fdR$ are parameters of the model. 
We define the energy harvested at device $k$ as $E^h_k = t_k^d P_k^h$, where $t_k^d$ is the transmission time in 
seconds that the edge server spends to unicast the aggregated weight to device $k$.
Therefore, we constraint the total energy spent by the total harvested energy at device $k$, i.e., $E_k^t + E_k^c \leq 
E^h_k$.

In the \acl{UL}, we assume that device $k$ has $D_k$ bits to transmit, corresponding to the dimension of the 
weight updates multiplied by the number of bits used to represent each dimension.
Thus, each device $k$ has a constraint $t_k^u R_k^u \geq D_k$ in the \acl{UL}.
Similarly, the \acl{DL} rate must ensure that the aggregated weight is unicasted to each device.
Hence, each device $k$ has a constraint $t_k^d R_k^d\geq D_k$ in the \acl{DL}.

In addition, we need to take into account the time necessary for device $k$ to solve the local problem.
Let us denote the time to solve the local training problem device $k$ as~\cite{Yang2021}
\begin{align}\label{eq:time_compute_k}
t_k^c = C_k A_k I_k/f_k,
\end{align}
where $C_k$, $A_k$, $I_k$, and $f_k$ are the same parameters as in Eq.~\eqref{eq:energy_comp}.
Therefore, we denote the time for one communication round, including the \acl{UL} transmission with local computation, 
and \acl{DL} transmission with harvesting time, as $t^r = \max_k\; (t_k^u + t_k^c) + \max_k\;(t_k^d)$.

\subsection{Learning Model}\label{sub:learning_model}

The \ac{FL} methods~\cite{McMahan2017,Li2020b} are designed to support multiple devices collecting data and 
a server (edge server in our case) coordinating the global learning objective for the network. 
The objective of the network is to solve the problem below:
\begin{align}\label{eq:global_FL_probl}
	\underset{\vtW}{\text{minimize}}
	& f(\vtW) = \sum\nolimits_{k=1}^{K} \frac{N_k}{N} F_k(\vtW), 
\end{align}
where $N = \sum_{k=1}^{K} N_k$ and $\vtW$ is the global weight of the network.
The devices measure the local empirical risk over different data distributions $\calD_k$, i.e. 
$F_k(\vtW)\triangleq \fdE_{\mtX_k\sim \calD_k} [f_k(\vtW;\mtX_k)]$ in which the expectation is over the $N_k$ samples 
held by device $k$.
The function $f_k(\vtW;\mtX_k)$ is the loss function at device $k$ using its local dataset $\mtX_k$ and the global 
weight $\vtW$.
To reduce the communication of the weights for each device, \ac{FL} methods consider that each device 
$k$ minimizes the loss function $F_k(\vtW_k)\triangleq \fdE_{\mtX_k\sim \calD_k} [f_k(\vtW_k;\mtX_k)]$ by using its own 
data $\mtX_k$ and with its own weights $\vtW_k$.
Then, the devices send the resulting $\vtW_k$ to the edge server, which aggregates the weights from each device and 
sends back the global weight $\vtW = \sum_{k=1}^{K} (N_k/N) \vtW_k$.
The devices use the global weight as the initial point to minimize $F_k(\vtW_k)$, and this process continues until 
problem~\eqref{eq:global_FL_probl} converges to the optimal solution.

The \ac{FL} approach considered is \mbox{FedProx}~\cite{Li2020b}, which guarantees convergence for scenarios with 
heterogeneous devices and non-convex learning objectives.
Due to these advantages, \mbox{FedProx} suits particularly well \ac{IoT}.
To further reduce communication exchange among the edge server and the devices, the local objective function is a 
surrogate for the global objective function.
The surrogate function $h_k(\vtW_k;\vtW^t)$ includes a proximal term to the local subproblem with the global update 
$\vtW^t$ at communication round $t$.
Thus, each device $k$ solves the following local training problem
\begin{align}\label{eq:local_FL_probl}
	\underset{\vtW_k}{\text{minimize}}
	&\; h_k(\vtW_k;\vtW^t) = F_k(\vtW_k) + \frac{\mu}{2} \Norm{\vtW_k - \vtW^t}^2,
\end{align}
where $\mu>0$ is a local configurable parameter.
In addition to computing the surrogate function, each device may solve problem~\eqref{eq:local_FL_probl} inexactly so 
that a flexible performance depending on the local computation capacity and communication may be included in the 
solution. 
Specifically, we introduce the $\gamma_k^t$-inexactness for device $k$ at round $t$ below.

\vspace*{-0.2cm}
\begin{defi}[$\gamma_k^t$-inexact solution]\label{def:gamma_inexact}
For a function $h_k(\vtW_k;\vtW^t) = F_k(\vtW_k) + (\mu/2) \Norm{\vtW_k - 
\vtW^t}^2$, and $\gamma_k^t\in [0,1]$, we define $\vtW_k^{(n)}$ as the $\gamma_k^t$-inexact solution of problem
$\underset{\vtW_k}{\text{minimize}}\; h_k(\vtW_k;\vtW^t)$ at iteration $n$ if 
$$\Norm{\nabla h_k(\vtW_k^{(n)};\vtW^t)} \leq  \gamma_k^t\Norm{\nabla h_k (\vtW^t;\vtW^t) },$$ where 
$\nabla h_k(\vtW_k;\vtW^t) = \nabla F_k(\vtW_k) + \mu (\vtW_k -\vtW^t).$
\end{defi}
Hence, problem~\eqref{eq:local_FL_probl} is solved using \ac{SGD} and iterates until a $\gamma_k^t$-inexact solution is 
obtained. 
The iterations are local at device $k$ and are termed \emph{epochs}.
Different than the traditional federated averaging method~\cite{McMahan2017}, the number of epochs in \mbox{FedProx} is 
controlled by $\gamma_k^t$ and may be different for each device.

From~\cite[Theorem 4 and Corollary 9]{Li2020b}, the convergence of FedProx is as follows.

\vspace*{-0.2cm}
\begin{thm}[Convergence: Variable $\gamma$'s]\label{thm:conv_vargamma}	
Assume the functions $F_k$ are non-convex, $L$-Lipschitz smooth, and there exists $L_- > 0$, such that $\nabla^2 
F_k \succeq -L_- \mtI$, with $\bar{\mu} \triangleq \mu-L_->0$. 
Consider that $B$ is a global measure of dissimilarity between the gradients of the devices~\cite[Definition 
3]{Li2020b}, and suppose that $\vtW^t$ is not a stationary solution and the local functions $F_k$ are $B$-dissimilar.
If $\mu$, $K$, and $\gamma_k^t$ are chosen such that
\begin{align}\label{eq:rho_conv_gamma}
\rho &= \left(\frac{(1-\gamma^t B)}{\mu} - (1+\gamma^t)(a_1 + a_2(1+\gamma^t)) \right) > 0,
\end{align}
then at communication round $t$ of \mbox{FedProx}, we have the following expected decrease $\rho$ in the global 
objective 
$$\fdE_{S_t} [f(\vtW^{t+1})] \leq f(\vtW^{t}) - \rho \Norm{\nabla f(\vtW^t)}^2,$$ 
where $S_t$ is the set of $K$ devices selected at iteration $t$, $\gamma^t=\max_{k \in S_t}~{\gamma_k^t}$, and the 
constants $a_1,\; a_2$ are
\begin{align*}
a_1 = \frac{B\mu\sqrt{2} + L B\sqrt{K} }{\mu \bar{\mu}\sqrt{K}},\quad 
a_2 = \frac{LB^2}{2\bar{\mu}^2 K} (K+4\sqrt{2}K+4).\qed
\end{align*}
\end{thm}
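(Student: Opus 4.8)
The plan is to follow the non-convex analysis of FedProx in~\cite{Li2020b} and verify that the claimed decrease, together with the explicit constants $a_1,a_2$, survives under the present inexactness and device-sampling model. The starting observation is that, although each $F_k$ is only assumed to satisfy $\nabla^2 F_k \succeq -L_-\mtI$, the proximal surrogate $h_k(\vtW_k;\vtW^t)=F_k(\vtW_k)+(\mu/2)\Norm{\vtW_k-\vtW^t}^2$ has Hessian $\nabla^2 h_k \succeq (\mu-L_-)\mtI = \bar{\mu}\mtI \succ 0$, so $h_k$ is $\bar{\mu}$-strongly convex and admits a unique minimizer $\hat{\vtW}_k^{t+1}$. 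First I would combine the stationarity $\nabla h_k(\hat{\vtW}_k^{t+1};\vtW^t)=\vtZero$ with the strong-convexity gradient inequality $\Norm{\nabla h_k(\vtW^t;\vtW^t)-\nabla h_k(\hat{\vtW}_k^{t+1};\vtW^t)} \geq \bar{\mu}\Norm{\vtW^t-\hat{\vtW}_k^{t+1}}$ and the identity $\nabla h_k(\vtW^t;\vtW^t)=\nabla F_k(\vtW^t)$ to bound the exact proximal step $\Norm{\hat{\vtW}_k^{t+1}-\vtW^t} \leq \Norm{\nabla F_k(\vtW^t)}/\bar{\mu}$.

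Next I would translate Definition~\ref{def:gamma_inexact} into a bound on the inexact iterate: strong convexity again gives $\Norm{\vtW_k^{t+1}-\hat{\vtW}_k^{t+1}} \leq \Norm{\nabla h_k(\vtW_k^{t+1};\vtW^t)}/\bar{\mu} \leq (\gamma_k^t/\bar{\mu})\Norm{\nabla F_k(\vtW^t)}$. Summing the two estimates controls $\Norm{\vtW_k^{t+1}-\vtW^t}$ by a multiple of $(1+\gamma_k^t)\Norm{\nabla F_k(\vtW^t)}/\bar{\mu}$, which is where the recurring $(1+\gamma^t)$ factors in~\eqref{eq:rho_conv_gamma} originate. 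The core estimate is then the descent step: writing $\bar{\vtW}^{t+1}=\sum_k (N_k/N)\hat{\vtW}_k^{t+1}$ for the full-participation aggregate of exact solutions, I would apply the $L$-smoothness of $f$ as $f(\bar{\vtW}^{t+1}) \leq f(\vtW^t) + \InnerProd{\nabla f(\vtW^t)}{\bar{\vtW}^{t+1}-\vtW^t} + (L/2)\Norm{\bar{\vtW}^{t+1}-\vtW^t}^2$. The proximal stationarity identities let me replace the inner-product term by approximately $-(1/\mu)\Norm{\nabla f(\vtW^t)}^2$ up to errors that are linear in the local gradient norms (via $\Norm{\nabla F_k(\hat{\vtW}_k^{t+1})-\nabla F_k(\vtW^t)}\leq L\Norm{\hat{\vtW}_k^{t+1}-\vtW^t}$), while the quadratic term supplies the $a_2$-type correction.

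To pass from $\bar{\vtW}^{t+1}$ to the actually transmitted iterate $\vtW^{t+1}$, which averages only the sampled devices in $S_t$, I would take $\fdE_{S_t}[\cdot]$, use that the weighted device average is unbiased, and bound its variance; this sampling variance is what produces the $\sqrt{K}$ and $(K+4\sqrt{2}K+4)$ factors inside $a_1$ and $a_2$. Finally I would invoke the $B$-dissimilarity assumption to replace every occurrence of $\sum_k (N_k/N)\Norm{\nabla F_k(\vtW^t)}$ and its square by a multiple of $\Norm{\nabla f(\vtW^t)}$ and $\Norm{\nabla f(\vtW^t)}^2$ respectively, collect the error terms, and read off the coefficient of $-\Norm{\nabla f(\vtW^t)}^2$ as exactly $\rho$.

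The hard part will be the bookkeeping in this last assembly. The inexactness factor $(1+\gamma^t)$, the dissimilarity constant $B$, and the sampling variance all enter multiplicatively in several places, and recovering $a_1,a_2$ in the stated closed form requires tracking each Cauchy--Schwarz and Jensen step so that no avoidable slack is introduced. I expect the single most delicate point to be establishing the unbiasedness and the precise variance bound for the $S_t$-average under the $N_k/N$ weighting, since it is exactly this step that converts the abstract $B$-dissimilarity bound into the concrete $K$-dependence of the constants.
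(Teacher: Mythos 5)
The paper itself contains no proof of Theorem~\ref{thm:conv_vargamma}: it is imported verbatim from \cite[Theorem 4 and Corollary 9]{Li2020b} (which is why the statement closes with a tombstone and the text introduces it with ``the convergence of FedProx is as follows''), so the relevant comparison is with the original FedProx argument. Your sketch reconstructs that argument's skeleton correctly: $\bar{\mu}$-strong convexity of $h_k$, the two bounds $\Norm{\hat{\vtW}_k^{t+1}-\vtW^t}\leq \Norm{\nabla F_k(\vtW^t)}/\bar{\mu}$ and $\Norm{\vtW_k^{t+1}-\hat{\vtW}_k^{t+1}}\leq (\gamma_k^t/\bar{\mu})\Norm{\nabla F_k(\vtW^t)}$, the $L$-smoothness descent step, unbiasedness plus a variance bound for the $S_t$-average, and $B$-dissimilarity (via Jensen) to convert averaged local gradient norms into $B\Norm{\nabla f(\vtW^t)}$. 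The one substantive deviation is that you center the descent step on the average of the \emph{exact} minimizers $\sum_k (N_k/N)\hat{\vtW}_k^{t+1}$ and defer all inexactness to the sampling step. Li et al.\ instead run the entire argument on the \emph{inexact} iterates: from $\nabla h_k(\vtW_k^{t+1};\vtW^t)=\nabla F_k(\vtW_k^{t+1})+\mu(\vtW_k^{t+1}-\vtW^t)$ they write $\vtW_k^{t+1}-\vtW^t=-\frac{1}{\mu}\left(\nabla F_k(\vtW_k^{t+1})-\nabla h_k(\vtW_k^{t+1};\vtW^t)\right)$ and invoke $\Norm{\nabla h_k(\vtW_k^{t+1};\vtW^t)}\leq\gamma_k^t\Norm{\nabla F_k(\vtW^t)}$, which is precisely how the inexactness enters with coefficient $1/\mu$ and produces the $(1-\gamma^t B)/\mu$ term together with the stated $a_1,a_2$. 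In your decomposition the inexactness correction instead shows up with coefficient $1/\bar{\mu}$, paired with the gradient of $f$ evaluated at the exact aggregate rather than at $\vtW^t$; since $\bar{\mu}=\mu-L_-<\mu$, you would prove a descent inequality of the same shape but with strictly looser constants, not the ones in~\eqref{eq:rho_conv_gamma}. So if the goal is the theorem exactly as stated, you must follow the original bookkeeping with the inexact iterates throughout; otherwise your plan is sound and identifies the right ingredients, including the correct attribution of the $\sqrt{K}$-dependence to the sampling variance.
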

In the following, we drop the index $t$ in $\gamma_k^t,\;\gamma^t$ and refer to them as $\gamma_k,\;\gamma$ to simplify 
notation, and consider that $S_t$ corresponds to all devices $K$. 
Moreover, the conditions in~\cite[Remark~5]{Li2020b} that ensure $\rho >0$, i.e., that sufficient decrease is 
attainable after each communication round, are the following
\begin{align}\label{eq:cond_gamma_rho_b}
\gamma B<1,\quad B<\sqrt{K}.
\end{align}
Then, the authors in~\cite[Theorem~6]{Li2020b} characterize the rate of convergence to the set of approximate 
stationary solutions $\calS_s = \{\vtW\;|\; \fdE[\Norm{\nabla f(\vtW) }^2] \leq \epsilon\}$ of 
problem~\eqref{eq:global_FL_probl}. 
Finally, the total number of communication rounds of \mbox{FedProx} is $T = O(\frac{\Delta}{\rho 
\epsilon})$, where $\Delta=f(\vtW^1)-f^*$.

In order to define the optimization problem, we first need to derive the number of local iterations (epochs) $I_k$, 
which depends on the solver used to obtain the optimal $\vtW_k^\star$ that provides the $\gamma_k$-inexact solution. 
In the following theorem, we establish the expression for $I_k$ in terms of $\gamma_k$.

\vspace*{-0.2cm}
\begin{thm}\label{thm:local_iter_number}
Consider that the local problem at device $k$ is solved via gradient descent with step size $\alpha < 2/(L+\mu)$.
Consider that the initial iteration for device $k$ is given by $\vtW_k^0 = \vtW^t$, and that $\beta = 
2/(\alpha\bar{\mu}\left(2-\alpha (L + \mu)\right))$.
Then, the number of local iterations $I_k$ is lower-bounded by
\begin{align}\label{eq:bound_loc_iter}
I_k \geq 2\beta\log\left(\frac{L + \mu }{\gamma_k^t \bar{\mu}}\right).
\end{align}
\end{thm}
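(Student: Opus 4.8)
The plan is to exploit the fact that, although each $F_k$ is only assumed to satisfy $\nabla^2 F_k \succeq -L_-\mtI$, the proximal term in $h_k$ restores strong convexity. First I would record the two regularity properties that drive everything: since $\nabla^2 h_k = \nabla^2 F_k + \mu\mtI$, the assumption $\nabla^2 F_k \succeq -L_-\mtI$ gives $\nabla^2 h_k \succeq (\mu - L_-)\mtI = \bar\mu\mtI$, so $h_k(\cdot;\vtW^t)$ is $\bar\mu$-strongly convex, while $L$-Lipschitz smoothness of $F_k$ gives $\nabla^2 h_k \preceq (L+\mu)\mtI$, so $h_k$ is $(L+\mu)$-smooth. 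Let $\vtW_k^\star$ denote the unique minimizer. These two properties, together with $\nabla h_k(\vtW_k^\star;\vtW^t) = 0$, immediately yield the sandwich $\bar\mu\Norm{\vtW_k - \vtW_k^\star} \le \Norm{\nabla h_k(\vtW_k;\vtW^t)} \le (L+\mu)\Norm{\vtW_k - \vtW_k^\star}$, which is what lets me trade gradient norms for distances to the optimum.

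Next I would establish the geometric rate of gradient descent on $h_k$. Applying the descent lemma for the $(L+\mu)$-smooth function $h_k$ to the update $\vtW_k^{(n+1)} = \vtW_k^{(n)} - \alpha\nabla h_k(\vtW_k^{(n)};\vtW^t)$ gives a one-step decrease proportional to $\alpha(1 - \alpha(L+\mu)/2)\Norm{\nabla h_k(\vtW_k^{(n)};\vtW^t)}^2$, which is nonnegative exactly because $\alpha < 2/(L+\mu)$. Combining this with the Polyak--Lojasiewicz inequality implied by $\bar\mu$-strong convexity collapses the recursion into the linear rate $h_k(\vtW_k^{(n)}) - h_k(\vtW_k^\star) \le c^n\,(h_k(\vtW_k^{(0)}) - h_k(\vtW_k^\star))$ with contraction factor $c = 1 - \alpha\bar\mu(2 - \alpha(L+\mu))$. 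Observe that $\alpha\bar\mu(2-\alpha(L+\mu)) = 2/\beta$, so $c = 1 - 2/\beta$, which is where the constant $\beta$ enters. Converting the function-gap contraction into a gradient statement through the sandwich inequality above, together with the strong-convexity bound $h_k(\vtW_k^{(0)}) - h_k(\vtW_k^\star) \le \Norm{\nabla h_k(\vtW_k^{(0)})}^2/(2\bar\mu)$, yields a bound of the form $\Norm{\nabla h_k(\vtW_k^{(n)};\vtW^t)} \le \sqrt{(L+\mu)/\bar\mu}\;c^{n/2}\,\Norm{\nabla h_k(\vtW_k^{(0)};\vtW^t)}$.

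Finally I would read off $I_k$ from the stopping rule. Because the initialization is $\vtW_k^{(0)} = \vtW^t$, Definition~\ref{def:gamma_inexact} asks precisely for $\Norm{\nabla h_k(\vtW_k^{(n)};\vtW^t)} \le \gamma_k^t\Norm{\nabla h_k(\vtW_k^{(0)};\vtW^t)}$, so it suffices to force the prefactor $\sqrt{(L+\mu)/\bar\mu}\,c^{n/2}$ below $\gamma_k^t$. Taking logarithms turns this into $n \ge \log((L+\mu)/(\gamma_k^t\bar\mu))\,/\,\log(1/c)$ after absorbing the condition-number factor into the logarithm, and applying the elementary inequality $-\log(1 - x) \ge x$ with $x = 2/\beta$ to lower-bound $\log(1/c)$ converts $1/\log(1/c)$ into a multiple of $\beta$, giving $I_k \ge 2\beta\log((L+\mu)/(\gamma_k^t\bar\mu))$.

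I expect the main obstacle to be matching the constants exactly rather than the architecture of the argument: one must be careful about whether the contraction is applied to the function gap, the squared iterate distance, or the gradient norm, since each choice shifts the exponent of the condition number $(L+\mu)/\bar\mu$ inside the logarithm and the multiplicative constant in front of $\beta$. The clean target $2\beta\log((L+\mu)/(\gamma_k^t\bar\mu))$ should emerge as a deliberately conservative sufficient iteration count, so any bookkeeping that overestimates the condition-number factor still yields a valid bound; the one genuinely load-bearing step is the passage from the rate $1 - 2/\beta$ to the factor $\beta$ via $-\log(1-x)\ge x$, which is what makes $\beta$ appear linearly outside the logarithm rather than buried inside it.
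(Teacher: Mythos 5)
Your proposal is correct and follows essentially the same route as the paper's proof: establish that $h_k$ is $(L+\mu)$-smooth and $\bar\mu$-strongly convex, use the descent lemma plus the strong-convexity (PL) inequality to obtain a linear contraction of the function gap, convert the gap to gradient norms at iteration $n$ and at initialization, and solve the resulting exponential inequality for $n$. Your constants differ slightly along the way (e.g., you get the prefactor $\sqrt{(L+\mu)/\bar\mu}$ and contraction $1-2/\beta$ where the paper carries $(L+\mu)/\bar\mu$ and $1-1/\beta$, and your final log-bookkeeping drops a factor of $2$ on the $\log(1/\gamma_k^t)$ term), but, exactly as you anticipate, the target $2\beta\log\bigl((L+\mu)/(\gamma_k^t\bar\mu)\bigr)$ is conservative enough that all of these discrepancies are absorbed and the sufficiency conclusion stands.
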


\vspace*{-0.4cm}
\begin{proof}
See Appendix~\ref{app:prof_thm_iter_number}.
\end{proof}

\vspace*{-0.2cm}
Based on the theoretical guarantees for the convergence of FedProx, we can now properly formulate the problem in the 
next section.

\section{Minimization of Communication Rounds and Round Time}\label{sec:probl_form}
Our objective is to minimize the number of communication rounds between the edge server and the devices, while 
minimizing the time for each communication round and ensuring that sufficient energy is harvested by the devices.
We consider the total number of communication rounds between the edge server and the devices from 
Section~\ref{sub:learning_model}.
Given that the term $\Delta$ cannot be estimated beforehand, we disregard $\Delta$ and assume that the total number of 
communication rounds is $I_g(\rho,\epsilon) \!\!=\!\! 1/(\rho \epsilon)$.
Since we aim to minimize the number of communication rounds $I_g(\rho,\epsilon)$, we can maximize $\rho$, i.e., the 
expected decrease in the global objective function per communication round, due to its monotonic decreasing relation 
with $I_g(\rho,\epsilon)$.

We formulate the trade-off between the number of communication rounds and the communication round time using 
\ac{SWIPT} to harvest while learning as
\begin{subequations}\label{eq:min_comm_rounds}
\begin{align}
\underset{\substack{ \{t_k^u\},\{t_k^d\},\{\gamma_k\},\gamma } }{\text{minimize}}\quad
& t^r -\rho \label{eq:obj_max_rho}\\
\text{subject to} \quad & t_k^u R_k^u \geq D_k, \forall k, \label{eq:min_data_k}\\
\quad & t_k^d R_k^d\geq D_k, \forall k, \label{eq:min_data_server_k}\\
\quad& E^h_k \geq \zeta\left(E_k^t + E_k^c\right), \forall k, \label{eq:energy_const}\\
\quad& \gamma_k \leq \frac{1-\xi}{B}, \forall k, \label{eq:gamma_k_const}\\
\quad& \gamma \geq \gamma_k, \forall k, \label{eq:gamma_const}\\
\quad& t_k^u, t_k^d, \gamma_k \geq 0, \forall k, \label{eq:nonzero_var}
\end{align}
\end{subequations}
where the objective is to minimize the number of communication rounds and the communication round time.
The constraints~\eqref{eq:min_data_k}-\eqref{eq:min_data_server_k} state that all the data should be transmitted to 
the edge server and then received by device $k$; constraint~\eqref{eq:energy_const} states that at least a $\zeta\in 
(0,1]$ part of the total energy used by device $k$ should be harvested; constraint~\eqref{eq:gamma_k_const} states the 
maximum value for $\gamma_k$ is constrained according to Eq.~\eqref{eq:cond_gamma_rho_b}, in which $\xi\in [0,1]$ 
is a predefined constant to turn the strict inequality into a non-strict inequality; and 
constraint~\eqref{eq:nonzero_var} states that all variables should be nonzero.

\subsection{Solution Approach}\label{sub:sol_approach}
For ease of notation, we rewrite the terms that include $\gamma_k$ in the objective function and 
constraints of problem~\eqref{eq:min_comm_rounds}. 
First, we can rewrite the energy to compute the model at device $k$, $E_k^c$, in Eq.~\eqref{eq:energy_comp} as
\begin{align}\label{eq:energy_comp_mod}
E_k^c = a_{3,k}\left(\log(a_4)-\log(\gamma_k)\right),
\end{align}
where the parameters $a_{3,k},\; a_4$ denote the following
\begin{align*}
a_{3,k} = 2\kappa\beta C_k A_k f_k^2,\; \forall k, \quad a_4 = (L+\mu)/\bar{\mu}.
\end{align*}
Similarly, we denote the time to compute the model, $t_k^c$, as
\begin{align*}
t_k^c = a_{5,k}\left(\log(a_4)-\log(\gamma_k)\right),\quad a_{5,k} = 2\beta C_k A_k/f_k.
\end{align*}

With these reformulations, we can rewrite problem~\eqref{eq:min_comm_rounds} in an equivalent form as
\begin{subequations}\label{eq:learning_problem}
\begin{align}
\underset{\substack{ \gamma,\{\gamma_k\},\\ \{t_k^u\}, \{t_k^d\} } }{\text{minimize}}\quad
& \left(\max_k\left(a_{5,k}\left(\log(a_4/\gamma_k)\right) + t_k^u\right) + \max_k(t_k^d) \right) +\nonumber\\
&  (1+\gamma)(a_1 + a_2(1+\gamma)) - (1-\gamma B)/\mu\label{eq:learn_obj}\\
\text{subject to} \quad & \gamma \geq \gamma_k, \forall k, \label{eq:max_gamma_k}\\
\quad & t_k^d P_k^h \geq \zeta\left(t_k^u p_k + a_{3,k}\log\left(\frac{a_4}{\gamma_k}\right)\right), \forall k, 
\label{eq:energy_gamma}\\
\quad & \gamma_k \leq (1-\xi)/B, \forall k, \label{eq:gamma_k_linear}\\
\quad & t_k^u \geq D_k/R_k^u, \forall k \label{eq:time_k_UL_min} \\
\quad & t_k^d \geq D_k/R_k^d, \forall k \label{eq:time_DL_min} \\
\quad & \gamma_k, t_k^d, t_k^u \geq 0, \forall k.
\end{align}
\end{subequations}
Note that problem~\eqref{eq:learning_problem} is jointly convex on all variables because the objective function is 
concave, while the constraints are convex.
Hence, it can be solved by using traditional convex solvers, such as interior-point methods present on 
CVXPY~\cite{CVXPY2016}.
The computational complexity is $\calO(K^4)$~\cite{GahinetLMI1995}, which is due to the use of interior-point methods 
to solve problem~\eqref{eq:learning_problem}.
\section{Numerical Results}\label{sec:num_results}
We consider a system comprised by a single pico-cell~\cite{3gpp.36.828}.
The total number of antennas at the edge server is $M=16$ and the total number of devices is $K=10$. 
We consider an image classification learning task using the MNIST dataset and a multi-class logistic regression as loss 
function.
The samples per device $N_k$ are randomly assigned to the devices in a non-IID manner, which ensures that a device will 
have tens or thousands of samples, and consider $32$ bits encoding, i.e., $A_k = 32N_k$.
We consider \ac{MRC} at the receiver and two transmitter beamforming options namely, \ac{MRT} and \ac{ZF}.
The test accuracy is the performance indicator used for learning tasks because it represents the percentage of correct 
predictions.
The parameters necessary to obtain the numerical results are listed in \tabref{tab:sim_param}.

\begin{table}
	\caption{Simulation parameters}\label{tab:sim_param}
	\scriptsize
	\begin{tabularx}{\columnwidth}{l|l}
		\hline 
		\textbf{Parameter}                         & \textbf{Value} \\
		\hline 
		Cell radius                                & \unit{40}{m} \\
		Number of devices $K$ 					   & $10$  \\
		Number of antennas at \ac{BS} $M$          & $16$\\
		Monte Carlo iterations                     & $100$ \\ 
		Carrier frequency / Channel bandwidth $[B_c]$& \unit{2.45}{GHz}/\unit{180}{kHz} \\
		\ac{LOS}/\ac{NLOS} path-loss model         & Set according to~\cite[Table 6.2-1]{3gpp.36.828}\\
		Thermal and circuit noise power $[\sigma^2,
		\sigma^2_c]$         					   & $[-95, -60]$~\unit{}{dBm} \\ 
		\ac{BS}/Device Tx power $[\Norm{\vtV_k}^2, 
		p_{k}]$ 								   & $[30,-20]$~\unit{}{dBm}\\ 
		Learning parameters $[\alpha, L, \mu, 
		\bar{\mu}, \xi, B]$						   & $\scale[0.8]{[10^{-3}, 100, 10, 10, 0.05, (1-\xi)\sqrt{K}]}$\\
		Energy harvesting params. $[\alpha_1, 
		\alpha_2, \alpha_3, \delta_k]$             & $\scale[0.8]{[-0.1160, 0.6574, -6.549\!\times\! 10^{-7}, 0.1]}$\\
		Energy computing params. $[\kappa, f_k,C_k, 
		D_k]$                                     & $[10^{-28}, 0.1,\stU(10,30), 251200]$\\
		\hline
	\end{tabularx}
\end{table}

\begin{figure}
	\centering
	\includegraphics[width=0.8\linewidth,trim=0mm 0mm 0mm 0mm,clip]{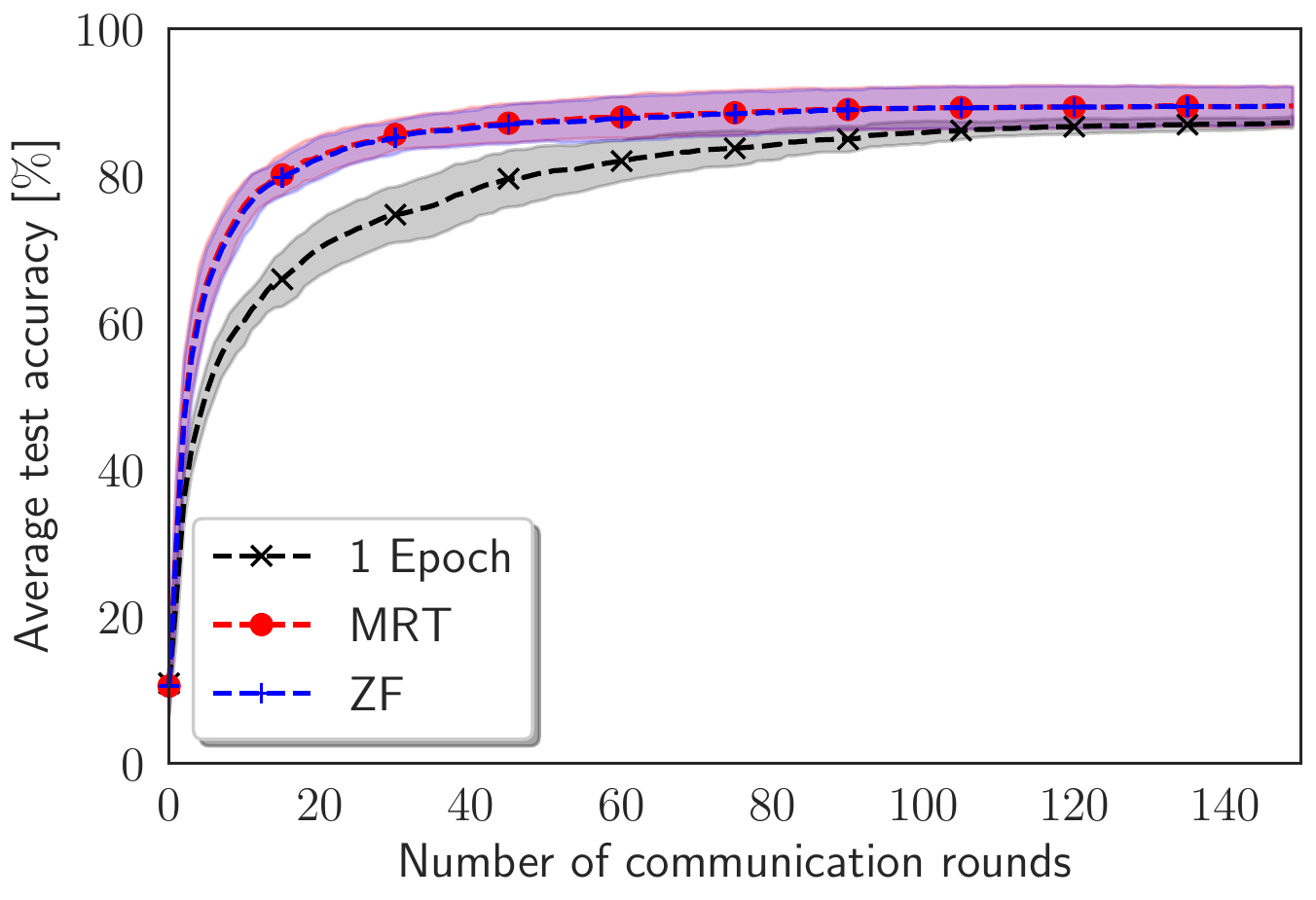}
	\caption{The average test accuracy versus the number of communication rounds for different $\gamma_k$ optimization, 
		including no optimization (1 epoch), \ac{MRT}, and \ac{ZF} optimization. \ac{MRT} and \ac{ZF} greatly 
		outperform the no 
		optimization baseline, showing the importance of an optimization of the local number of iterations based on the 
		solution of problem~\eqref{eq:learning_problem}.}
	\label{fig:test_accuracy}
\end{figure}
\figref{fig:test_accuracy} shows the average test accuracy versus the number of communication rounds over 10 Monte 
Carlo iterations using a baseline \mbox{FedProx} with only one epoch, i.e., without $\gamma_k$ optimization and using 
only one \ac{SGD} iteration.
We compare it with the solution of problem~\eqref{eq:learning_problem} by using \ac{MRT} and \ac{ZF} beamforming.
We assume that $\zeta=1.0$, which means that in each round the devices must harvest all the total energy spent with the 
training and transmission.
Note that the shaded region for the curves represent the standard deviation at each communication round.
Without $\gamma_k$ optimization, the average test accuracy increases much slower than the \ac{MRT} and \ac{ZF} 
solutions using $\gamma_k$ optimization.
For instance, at 20 communication rounds the average test accuracy of \ac{MRT} and \ac{ZF} are approximately 
\unit{82}{\%}, while with one epoch it is only \unit{69}{\%}.
Moreover, there is almost no difference in the average and standard deviation of the test accuracy for \ac{MRT} and 
\ac{ZF}, which is due to a similar number of local iterations from the solution of problem~\eqref{eq:learning_problem}.
The energy received by \ac{ZF} is lower than \ac{MRT} because \ac{ZF} nulls the interference from 
neighbouring users, and thus causes inequality~\eqref{eq:energy_gamma} harder to be attainable.
To fulfil inequality~\eqref{eq:energy_gamma}, the \acl{DL} time is increased and the value of $\gamma_k$ is decreased, 
but kept at similar values for both \ac{ZF} and \ac{MRT}.
Therefore, there are substantial benefits in terms of communication rounds, when optimizing the learning parameters 
$(\gamma_k)$ based on the solution of problem~\eqref{eq:learning_problem}.


\begin{figure}
	\centering
	\includegraphics[width=0.8\linewidth,trim=0mm 0mm 0mm 0mm,clip]{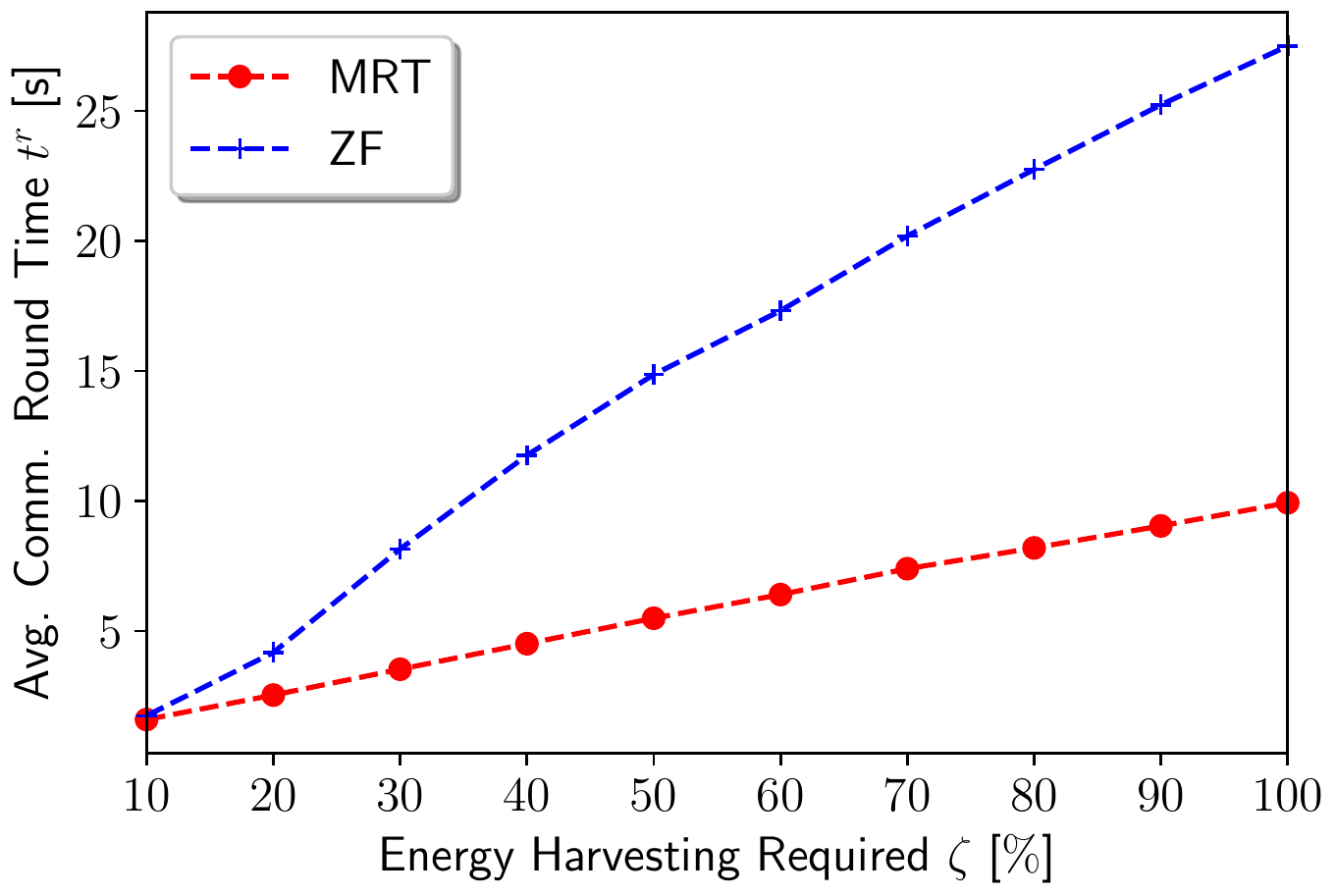}
	\caption{Average time for one communication round $(t^r)$ versus minimum energy harvesting required $(\zeta)$ with 
		\ac{MRT} and \ac{ZF} beamforming. Note that \ac{MRT} has a much lower communication round time than \ac{ZF}, 
		which is 
		due to much higher time needed to harvest sufficient energy in the \acl{DL}.}
	\label{fig:comm_round}
\end{figure}

\figref{fig:comm_round} shows the average communication round time, $t^r$, across all Monte Carlo iterations versus the 
percentage of energy harvesting required $\zeta$, when using \ac{MRT} and \ac{ZF} beamforming.
As $\zeta$ increases from \unit{10}{\%} to \unit{100}{\%}, the communication round time increases much faster for 
\ac{ZF} than \ac{MRT}.
It indicates that this difference is due to $t_k^d$ and that inequality~\eqref{eq:energy_gamma}, instead of 
inequality~\eqref{eq:min_data_server_k}, is the bottleneck for both \ac{MRT} and \ac{ZF}.
Since \ac{ZF} nulls the interference from neighbouring users, the total received and harvested power is lower than the 
\ac{MRT} case, as expected from the energy harvesting literature~\cite{Timotheou2014}.
Moreover, it implies that the necessary time to harvest a $\zeta$-percentage of the total energy used by device $k$ is 
much higher with \ac{ZF} than \ac{MRT}.
Since \ac{MRT} and \ac{ZF} have practically the same test accuracy, the \ac{MRT} beamforming is a better solution to 
use in a practical scenario because it provides the best trade-off in terms of test accuracy and energy harvesting 
capabilities within a short communication round time.

\section{Conclusion}\label{sec:concl}
In this paper, we considered the trade-off between the number of communication rounds in a \ac{FL} model and the 
necessary communication round time to harvest energy in an \ac{IoT} scenario.
Specifically, our objective was to minimize the total number of communication rounds necessary to learn a model using 
\ac{FL} while minimizing the communication round time, which includes the time to compute and transmit the model, and 
the time to harvest a sufficient amount of the energy spent during training and transmission.
Considering fixed transmit beamforming, this resulted in an optimization problem with learning and time parameters that 
is jointly convex in all the parameters.
The numerical results showed that the solution to this problem using either \ac{MRT} or \ac{ZF} beamforming achieves a 
much higher test accuracy than a baseline solution that does not optimize the learning parameters based on 
communication aspects.
We showed that the \ac{MRT} beamforming solution provides much lower communication round time than \ac{ZF} 
beamforming, while achieving practically the same test accuracy. 
Therefore, we showed that \ac{MRT} beamforming is able to achieve a much better trade-off in terms of number of 
communication rounds and communication round time.
For future work, we intend to study the impact of using optimized beamforming and energy harvesting parameters 
together with the learning and time parameters in the aforementioned \mbox{trade-off}.

\appendices

\section{Proof of Theorem~\ref{thm:local_iter_number}}\label{app:prof_thm_iter_number}
In the following proof, we refer to $h(\vtW_k;\vtW^t)$ as $h(\vtW_k)$, and refer to $\NormTwo{\cdot}$ as $\Norm{\cdot}$ 
to ease the notation.
From~\cite{Li2020b}, recall that the function $h(\vtW_k)$ is $\bar{\mu}$-strongly convex, and that the function 
$F_k(\vtW_k)$ is $L$-smooth.
First, we prove in Lemma~\ref{lem:Lmu_smooth_h} that the $h(\vtW_k)$ is also $(L+\mu)$-smooth.
\begin{lem}\label{lem:Lmu_smooth_h}
Consider the function $$h_k(\vtW_k) = F_k(\vtW_k) + \frac{\mu}{2} \Norm{\vtW_k - \vtW^t}^2,$$ where 
$F_k(\vtW_k)$ is $L$-smooth.
Then, the function $h_k(\vtW_k)$ is $(L+\mu)$-smooth.
\end{lem}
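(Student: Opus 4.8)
The plan is to verify the definition of smoothness directly, recalling that a function is $L$-smooth precisely when its gradient is $L$-Lipschitz. First I would write the gradient of $h_k$ explicitly: differentiating the proximal term $\frac{\mu}{2}\Norm{\vtW_k - \vtW^t}^2$ with respect to $\vtW_k$ gives $\mu(\vtW_k - \vtW^t)$, so that $\nabla h_k(\vtW_k) = \nabla F_k(\vtW_k) + \mu(\vtW_k - \vtW^t)$, which is exactly the gradient expression already recorded in Definition~\ref{def:gamma_inexact}.

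Next, I would take two arbitrary points $\vtW_k$ and $\vtY_k$ and subtract their gradients. Because $\vtW^t$ is a fixed point (the global weight at round $t$, not an optimization variable), the constant shift $\mu\vtW^t$ cancels, leaving $\nabla h_k(\vtW_k) - \nabla h_k(\vtY_k) = \left(\nabla F_k(\vtW_k) - \nabla F_k(\vtY_k)\right) + \mu(\vtW_k - \vtY_k)$. Applying the triangle inequality, then bounding the first term by $L\Norm{\vtW_k - \vtY_k}$ via the $L$-smoothness of $F_k$ and the second term by $\mu\Norm{\vtW_k - \vtY_k}$, yields $\Norm{\nabla h_k(\vtW_k) - \nabla h_k(\vtY_k)} \leq (L+\mu)\Norm{\vtW_k - \vtY_k}$, which is precisely the assertion that $h_k$ is $(L+\mu)$-smooth.

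There is essentially no obstacle in this argument; the only point deserving care is confirming that $\vtW^t$ is held fixed, so that the quadratic contributes the clean additive constant $\mu$ rather than a term depending on $\vtW^t$. This lemma then serves the proof of Theorem~\ref{thm:local_iter_number} by supplying the smoothness constant $L+\mu$ that, together with the $\bar{\mu}$-strong convexity of $h_k$, produces the standard geometric contraction rate of gradient descent and hence the logarithmic lower bound on the number of local iterations $I_k$.
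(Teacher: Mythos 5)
Your proof is correct, but it takes a genuinely different route from the paper's. The paper argues at the second-order level: it computes $\nabla^2 h_k(\vtW) = \nabla^2 F_k(\vtW) + \mu\mtI$ and bounds this in the semidefinite order by $L\mtI + \mu\mtI = (L+\mu)\mtI$, invoking the Hessian characterization $\nabla^2 F_k \preceq L\mtI$ of $L$-smoothness. You instead work at the first-order level, showing directly that $\nabla h_k$ is $(L+\mu)$-Lipschitz: the constant term $\mu\vtW^t$ cancels when gradients at two points are subtracted, and the triangle inequality splits the bound into the $L$-Lipschitz part from $F_k$ plus the $\mu$ part from the proximal term. The two approaches buy slightly different things. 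Yours is more elementary and more general, since it only requires $F_k$ to be differentiable with Lipschitz gradient, not twice differentiable; the paper's is a one-line computation but presupposes the Hessian exists — harmless in context, because Theorem~\ref{thm:conv_vargamma} already assumes $\nabla^2 F_k \succeq -L_-\mtI$, so the paper is operating in a twice-differentiable regime throughout. (Incidentally, the paper's step $(a)$ is justified by $F_k$, not $h_k$, being $L$-smooth — a typo in the paper's proof — and your formulation avoids that slip entirely.) Your closing observation about how the lemma supplies the smoothness constant that, combined with $\bar{\mu}$-strong convexity, drives the geometric contraction in Theorem~\ref{thm:local_iter_number} matches exactly how the paper uses it.
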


\begin{proof}
Using the definition of $h_k(\vtW_k)$ for any $\vtW_k$ as $\vtW$ and $\vtZ$, we have that
\begin{align*}
\nabla^2 h_k(\vtW) &= \nabla^2 \left( F_k(\vtW) + \frac{\mu}{2} \Norm{\vtW - \vtW^t}^2\right),\\
&= \nabla^2 F_k(\vtW) + \mu \mtI ,\\
&\overset{(a)}{\leq} L\mtI  + \mu \mtI = (L+\mu) \mtI,
\end{align*}
where $(a)$ is due to  $h_k(\vtW_k)$ being $L$-smooth.
Hence, we have proved that the function $h_k(\vtW_k)$ is $(L+\mu)$-smooth.
\end{proof}

Let us assume that $n$ is the last iteration of the local solver using gradient descent, i.e., the inequality 
$\Norm{\nabla h_k(\vtW_k^{(n)})} \leq  \gamma_k^t\Norm{\nabla h_k (\vtW^t) }$ is valid with 
$0 \leq \gamma_k^t\leq 1$.
As a local solver, we consider the traditional gradient descent algorithm with step size $\alpha$ and updates as
\begin{equation}\label{eq:grad_des_upd}
\vtW_k^{(n)} = \vtW_k^{(n-1)} - \alpha \nabla h_k(\vtW_k^{(n)}).
\end{equation}
We denote as $\vtW_k^\star$ the optimal (exact) solution of the optimization problem of minimizing $h_k(\vtW_k)$ over 
$\mathbf{w}_k$.

Since $h_k(\vtW_k)$ is $(L+\mu)$-smooth, we have the following inequality between iterations $n$ and 
$n-1$~\cite[Proposition (6.1.2)]{Bertsekas2015}:

{\small
\begin{align*}
h_k(\vtW_k^{(n)}) &\leq h_k(\vtW_k^{(n-1)}) + \nabla \Transp{h_k(\vtW_k^{(n-1)})} \left(\vtW_k^{(n)} - 
\vtW_k^{(n-1)} 
\right) + \\
&\hspace*{0.4cm} \frac{L+\mu}{2}\Norm{\vtW_k^{(n)} - \vtW_k^{(n-1)}}^{2},\\
&\overset{(c)}{\leq} h_k(\vtW_k^{(n-1)}) - \alpha \Norm{\nabla h_k(\vtW_k^{(n-1)})}^2 +\\ 
&\hspace*{0.4cm} \frac{\alpha^2 (L+\mu)}{2} \Norm{\nabla h_k(\vtW_k^{(n-1)})}^2,\\
&= h_k(\vtW_k^{(n-1)}) \!-\!\alpha\!\left(\!\frac{2- \alpha(L+\mu)}{2}\!\right)\!\! \Norm{\nabla 
	h_k(\vtW_k^{(n-1)})}^2,
\end{align*}
}
where $(c)$ is due to Eq.~\eqref{eq:grad_des_upd}. 

Hence, we have the following inequality between iterations $n$ and 
$n-1$:
{\small
\begin{equation}\label{eq:ineq_iter_n_n-1}
\hspace*{-0.2cm}h_k(\vtW_k^{(n)}) \!\leq  h_k(\vtW_k^{(n-1)}) \!-\! \alpha\left(\!\frac{2- 
\alpha(L+\mu)}{2}\!\right)\! 
\Norm{\nabla h_k(\vtW_k^{(n-1)})}^2\hspace*{-0.2cm}.
\end{equation}}

Recall that the function $h_k(\vtW_k)$ is $\bar{\mu}$-strongly convex. Using the fact that $\nabla h_k(\vtW_k^\star) = 
\vtZero$, the following inequality holds~\cite[Eq. (6.20)]{Bertsekas2015}:
\begin{equation*}
\Norm{\nabla h(\vtW_k^{(n-1)}) }^2 \geq \bar{\mu} \left(h_k(\vtW_k^{(n-1)}) - h_k(\vtW_k^\star) \right).
\end{equation*}
Using the expression above in Eq.~\eqref{eq:ineq_iter_n_n-1}, we have that:
{\small
\begin{align}
h_k(\vtW_k^{(n)}) &\leq  h_k(\vtW_k^{(n-1)}) \!-\! \alpha\bar{\mu}\left(\!\frac{2- \alpha(L+\mu)}{2}\!\right)\! 
\times\nonumber\\
&\hspace*{0.4cm} \left(h_k(\vtW_k^{(n-1)}) - h_k(\vtW_k^\star) \right),\nonumber\\
h_k(\vtW_k^{(n)}) - h_k(\vtW_k^\star) &\overset{(d)}\leq \left(h_k(\vtW_k^{(n-1)}) - h_k(\vtW_k^\star) \right) 
-\nonumber\\
&\hspace*{0.4cm}\! \alpha\bar{\mu}\!\left(\!\frac{2- \alpha(L+\mu)}{2}\!\right)\!\! \left(\!h_k(\vtW_k^{(n-1)}) 
\!-\! 
h_k(\vtW_k^\star) \!\right),\nonumber\\
h_k(\vtW_k^{(n)}) - h_k(\vtW_k^\star) &\overset{(e)}\leq \left[1 - \!\left(\!\frac{\left(2- 
	\alpha(L+\mu)\right)\alpha\bar{\mu}}{2}\!\right) \right] \times\nonumber\\
&\hspace*{0.4cm}\left(\!h_k(\vtW_k^{(n-1)}) \!-\! h_k(\vtW_k^\star) \!\right),\nonumber\\
h_k(\vtW_k^{(n)}) - h_k(\vtW_k^\star) &\overset{(f)}\leq \left[1 - \!\left(\!\frac{\left(2- 
	\alpha(L+\mu)\right)\alpha\bar{\mu}}{2}\!\right) \right]^n \times\nonumber\\
&\hspace*{0.4cm}\left(\!h_k(\vtW_k^{(0)}) \!-\! h_k(\vtW_k^\star) \!\right),\nonumber\\
h_k(\vtW_k^{(n)}) - h_k(\vtW_k^\star) &\overset{(g)}\leq \exp\left[-n\alpha\bar{\mu}\frac{\left(2- 
	\alpha(L+\mu)\right)}{2}\!\right] \times\nonumber\\
&\hspace*{0.4cm}\left(\!h_k(\vtW_k^{(0)}) \!-\! h_k(\vtW_k^\star) \!\right),\label{eq:bound_k_0}
\end{align}}
where $(d)$ adds the term $-h_k(\vtW_k^\star)$ to both sides of the inequality; $(f)$ uses the inequality in $(e)$ in a 
recursive manner from $n-1$ to $0$; and $(g)$ uses the inequality $(1-x)^n\leq \exp(-nx)$.

With Eq.~\eqref{eq:bound_k_0}, we need to bound the left- and right-hand sides such that we obtain $\gamma_k^t$.
Let us first bound the left-hand side using the $(L+\mu)$-smoothness of $h_k(\vtW_k)$ at $\vtW_k^{(n)}$ and 
$\vtW_k^\star$~\cite[Proposition 6.1.9]{Bertsekas2015}:

{\small
\begin{align}
h_k(\vtW_k^{(n)}) &\geq h_k(\vtW_k^\star) + \Transp{\nabla h_k(\vtW_k^\star)} \left(\vtW_k^{(n)} -\vtW_k^\star 
\right) 
+\nonumber\\ 
&\hspace*{0.4cm} \frac{1}{2\left(L+\mu \right)} \Norm{\nabla h_k(\vtW_k^{(n)} ) - \nabla 
	h_k(\vtW_k^\star)}^2,\nonumber\\
h_k(\vtW_k^{(n)}) - h_k(\vtW_k^\star) &\overset{(h)}\geq \frac{1}{2\left(L+\mu \right)} \Norm{\nabla 
h_k(\vtW_k^{(n)} 
	)}^2,\label{eq:bound_left_k_0}
\end{align}}
where $(h)$ uses the fact that $\nabla h_k(\vtW_k^\star) = \vtZero$.

For the right-hand side, we need to use $(L+\mu)$-smoothness~\cite[Proposition (6.1.2)]{Bertsekas2015} at 
$\vtW_k^{(0)}$ and $\vtW_k^\star$:
\begin{align}
h_k(\vtW_k^{(0)}) &\leq h_k(\vtW_k^\star) + \nabla h_k(\vtW_k^\star) \left(\vtW_k^{(0)} -\vtW_k^\star \right) 
+\nonumber\\ 
&\hspace*{0.4cm} \frac{L+\mu}{2} \Norm{\vtW_k^{(0)} -\vtW_k^\star }^2,\nonumber\\
h_k(\vtW_k^{(0)}) - h_k(\vtW_k^\star) &\overset{(j)}\leq \frac{L+\mu}{2} \Norm{\vtW_k^{(0)} -\vtW_k^\star 
}^2,\label{eq:bound_right_smooth}
\end{align}
where $(j)$ uses the fact that $\nabla h_k(\vtW_k^\star) = \vtZero$.
Then, we need to use the $\bar{\mu}$-strong convexity of $h_k(\vtW)$~\cite[Eq. (6.20)]{Bertsekas2015} 
at $\vtW_k^{(0)}$ and $\vtW_k^\star$ as follows:
\begin{align}\label{eq:bound_right_strong}
\Norm{\vtW_k^{(0)} -\vtW_k^\star } \leq \frac{1}{\bar{\mu}} \Norm{\nabla h_k(\vtW_k^{(0)} )}.
\end{align}
Combining both Eqs.~\eqref{eq:bound_right_smooth}-\eqref{eq:bound_right_strong}, we get that:
\begin{align}
h_k(\vtW_k^{(0)}) - h_k(\vtW_k^\star) &\leq \frac{L+\mu}{2\bar{\mu}^2} \Norm{\nabla h_k(\vtW_k^{(0)} )}^2,\nonumber\\
h_k(\vtW_k^{(0)}) - h_k(\vtW_k^\star) &\overset{(k)}\leq \frac{L+\mu}{2\bar{\mu}^2} \Norm{\nabla F_k(\vtW^t 
	)}^2,\label{eq:bound_right_k_0}
\end{align}
where $(k)$ uses the assumption that $\vtW_k^{(0)}=\vtW^t$.

Using the bounds for left-hand side in Eq.~\eqref{eq:bound_left_k_0} and for the right-hand side in 
Eq.~\eqref{eq:bound_right_k_0} of Eq.~\eqref{eq:bound_k_0}, we have that:
\begin{align}
\frac{1}{2\left(L+\mu \right)} \Norm{\nabla h_k(\vtW_k^{(n)})}^2 & \leq 
\exp\left[-n\alpha\bar{\mu}\frac{\left(2- \alpha(L+\mu)\right)}{2}\!\right] \times\nonumber\\
&\hspace*{0.4cm} \frac{L+\mu}{2\bar{\mu}^2} \Norm{\nabla F_k(\vtW^t )}^2,\nonumber\\
\Norm{\nabla h_k(\vtW_k^{(n)})}^2 &\overset{(l)}\leq \left(\!\! \left(\frac{L+\mu}{\bar{\mu}}\right)
\exp\left(\frac{-n}{2\beta}\!\right)\!\! \right)^2\!\!\times\nonumber\\
&\hspace*{0.4cm} \Norm{\nabla F_k(\vtW^t )}^2\!\!,\label{eq:bound_sim_gamma_t}
\end{align}
where $(l)$ reorganizes the terms with the definition of $\beta \!=\! \tfrac{2}{\alpha\bar{\mu}\left(2-\alpha (L + 
	\mu)\right)}$, with $\alpha\! <\! \tfrac{2}{L +\mu}$.
To ensure a $\gamma_k^t$-inexact solution as in Definition~\ref{def:gamma_inexact}, and using 
Eq.~\eqref{eq:bound_sim_gamma_t}, we need to ensure that:
\begin{equation*}
\left(\frac{L+\mu}{\bar{\mu}}\right) \exp\left(\frac{-n}{2\beta}\!\right) \leq \gamma_k^t.
\end{equation*}
If we reorganize the inequality above, we have that:
\begin{align}
\exp\left(\frac{-n}{2\beta}\!\right) &\leq \frac{\gamma_k^t \bar{\mu}}{L+\mu},\nonumber\\
\frac{-n}{2\beta} &\overset{(m)}\leq \log\left(\frac{\gamma_k^t \bar{\mu}}{L+\mu} \right),\nonumber\\
n &\overset{(o)}\geq 2\beta \log\left(\frac{L+\mu}{\gamma_k^t \bar{\mu}} \right),
\end{align}
$(m)$ applies the logarithm function on both sides of the inequality; and $(o)$ reorganizes the term in the logarithmic 
function such that the sign on both sides of the inequality is reordered.
By redefining $n$ as $I_k$, we have proved the bound of the number of iterations proposed in 
Theorem~\ref{thm:local_iter_number}.

\bibliographystyle{IEEEtran}
\bibliography{FLbio}

\end{document}